\newtheorem{theorem}{Theorem}
\theoremstyle{remark}
\newtheorem*{remark}{Remark}
\begin{document}

\title{Blázquez-Salcedo--Knoll--Radu Wormholes Are Not Solutions to the Einstein--Dirac--Maxwell Equations}
\author{Daine L. Danielson, Gautam Satishchandran, Robert M. Wald, Robert J. Weinbaum \\Enrico Fermi Institute and Department of Physics\\ The University of Chicago\\ 5640 South Ellis Avenue, Chicago, Illinois 60637, USA}
\date{}                     
\setcounter{Maxaffil}{0}
\renewcommand\Affilfont{\itshape\small}

\maketitle

\begin{abstract}

Recently, Blázquez-Salcedo, Knoll, and Radu (BSKR) have given a class of static, spherically symmetric, traversable wormhole spacetimes with Dirac and Maxwell fields. The BSKR wormholes are obtained by joining a classical solution to the Einstein--Dirac--Maxwell (EDM) equations on the ``up'' side of the wormhole ($r \geq 0$) to a corresponding solution on the ``down'' side of the wormhole ($r \leq 0$). However, it can be seen that the BSKR metric fails to be $C^3$ on the wormhole throat at $r=0$. We prove that if the matching were done in such a way that the resulting spacetime metric, Dirac field, and Maxwell field composed a solution to the EDM equations in a neighborhood of $r=0$, then all of the fields would be smooth at $r=0$ in a suitable gauge. Thus, the BSKR wormholes cannot be solutions to the EDM equations. The failure of the BSKR wormholes to solve the EDM equations arises both from the failure of the Maxwell field to satisfy the required matching conditions (which implies the presence of an additional shell of charged matter at $r=0$) and, more significantly, from the failure of the Dirac field to satisfy required matching conditions (which implies the presence of a spurious source term for the Dirac field at $r=0$).
    
\end{abstract}

\section{Introduction}

In a recent paper \cite{Blazquez-Salcedo:2020czn}, Blázquez-Salcedo, Knoll, and Radu (BSKR) have provided examples of traversable wormhole spacetimes, which are claimed to be classical solutions to the Einstein--Dirac--Maxwell (EDM) equations. The existence of traversable wormhole solutions without unphysical matter would be of great interest, as it would open new possibilities for the topology of spacetime and for causal connections between different regions of spacetime. The existence of a traversable wormhole would require a violation of the averaged null energy condition \cite{Friedman:1993ty,Galloway_1995}, and such a violation would be of considerable interest in its own right. 

The BSKR wormhole spacetimes have a static, spherically symmetric metric of the form
\begin{equation}
    ds^2 = -F^2_0 (r) dt^2 + F_1^2 (r) dr^2 + F^2_2(r) d \Omega^2
    \label{whmetric}
\end{equation}
where $F_2(r)$ is taken to be of the form
\begin{equation}
    F_2(r) = \sqrt{(r^2 + r_0^2)}.
    \label{F2}
\end{equation}
The wormhole spacetimes are constructed by finding smooth solutions of the EDM equations separately in the regions $r \geq 0$ and $r \leq 0$ and then joining these solutions along the timelike hypersurface $\Sigma$ at $r=0$. In order that the resulting metric and fields solve the EDM equations without the presence of spurious sources at $\Sigma$, it is necessary that suitable matching conditions be satisfied. For the metric, the matching conditions \cite{Israel:1966rt} are that the induced metric of $\Sigma$ and the extrinsic curvature of $\Sigma$ agree. For the electromagnetic field, the matching conditions are that we can choose a gauge in which the $4$-vector potential $A_\mu$ and its normal derivative match. For the Dirac field, the matching conditions are that if we work in a tetrad that is continuous at $\Sigma$ and in an electromagnetic gauge where $A_\mu$ is continuous at $\Sigma$, the Dirac field components must match. Failure to satisfy the matching of the induced metric would yield a spacetime for which the normal derivative of the metric would have a delta-function contribution at $\Sigma$, and the resulting Einstein tensor could not even be interpreted distributionally \cite{PhysRevD.36.1017}. Failure to satisfy any of the other matching conditions would correspond to the presence of spurious distributional source terms at $\Sigma$ in the EDM equations.

Satisfaction of the matching conditions directly requires that, in a suitable gauge, the spacetime metric and vector potential must be $C^1$ at $\Sigma$ and the Dirac field must be $C^0$ at $\Sigma$. We will show in Sec. \ref{smoothness} that the satisfaction of the EDM equations for $r \geq 0$ and $r \leq 0$ then implies that, in a suitable gauge, all derivatives of the metric, Maxwell field, and Dirac field must match on $\Sigma$, so for any solution of the EDM equations without spurious sources on $\Sigma$, all of these fields must be smooth ($C^\infty$). However, for all of the BSKR wormholes, the metric fails to be $C^3$ on $\Sigma$. We therefore conclude that none of the BSKR wormholes can be solutions to the EDM equations; i.e., they all must contain spurious sources on $\Sigma$.

BSKR properly impose the matching conditions for the metric. However, as we shall see in Sec. \ref{dirac}, they did not impose proper matching conditions on the electromagnetic field, resulting in the presence of an additional charged shell at $r=0$. It is conceivable that such a charged shell could be modeled by the presence of other physical charged matter, although it is not obvious that traversable wormhole solutions could be obtained in this way, since any physically acceptable charged matter also would contribute to the matter stress-energy tensor. However, a failure of the matching conditions for the Dirac field would be more serious, since the Dirac field does not have any known physical sources. Therefore, it is important to check if the matching conditions for the Dirac field are satisfied. This is not entirely straightforward to analyze, since BSKR use a tetrad whose radial vector $e^a_3$ points in the positive $r$ direction for $r \geq 0$ and in the negative $r$ direction for $r \leq 0$. Thus, their tetrad is discontinuous at $r=0$, and this discontinuity must be taken into account when considering the matching conditions. In Sec. \ref{dirac}, we obtain the proper matching conditions required for the continuity of the Dirac field. We find that these matching conditions are not satisfied by the BSKR wormholes.

In summary, the BSKR wormholes contain both shells of charged matter and, more significantly, spurious sources for the Dirac field at the wormhole throat at $r=0$. Therefore, they are not solutions to the EDM equations.

\medskip

\noindent
{\it Notation.}---We will use lowercase latin letters from the early part of the alphabet (i.e., $a,b,\dots$) to denote abstract spacetime indices, e.g., the spacetime metric will be denoted as $g_{ab}$. We will use greek letters from the middle part of the alphabet (i.e., $\mu, \nu, \dots$) to denote coordinate components of tensors. We also use mid-alphabet greek letters to enumerate tetrad vectors (e.g., a tetrad will be denoted as $\{e^a_\mu \}$, with $\mu = 0,1,2,3$). We will use lowercase latin indices from the mid-part of the alphabet (i.e., $i,j,\dots$) to denote the non-normal components of tensors in Gaussian normal coordinates based on the timelike hypersurface $\Sigma$ at $r=0$. (For Gaussian normal coordinates based on a spacelike hypersurface, these would correspond to spatial components, but they correspond to the nonradial components in our case.) Finally, we will use uppercase latin indices to denote Weyl spinors and use lowercase greek indices from the early alphabet (i.e., $\alpha, \beta, \dots$) to denote components of Dirac spinors. Thus $\phi^A$ denotes a Weyl spinor, and $\Psi_\alpha$ denotes a Dirac spinor.

\section{Smoothness of Solutions at $r=0$}
\label{smoothness}

In this section, we consider spacetimes that are obtained by gluing solutions along a noncharacteristic (i.e., timelike or spacelike) boundary. Our arguments and results are extremely general, but to keep the discussion simple, we will restrict consideration to the EDM system. 

Suppose we are given a smooth ($C^{\infty}$) solution $(g^+_{ab}, A^+_a, \Psi^+_\alpha)$ to the EDM equations on a manifold $M^+$. Suppose that a boundary $\Sigma^+$ can be attached to $M^+$ such that $M^+ \cup \Sigma^+$ is a manifold with boundary. Suppose that $(g^+_{ab}, A^+_a, \Psi^+_\alpha)$ can be smoothly extended to $\Sigma^+$ and that $\Sigma^+$ is everywhere noncharacteristic with respect to $g^+_{ab}$; i.e., it is either everywhere spacelike or everywhere timelike. In the case of BSKR wormholes, $M^+$ would correspond to the region $r > 0$, and $\Sigma^+$ would correspond to the timelike hypersurface $r = 0$. For definiteness, we will assume in the following that $\Sigma^+$ is timelike. 

Now suppose we also are given another smooth solution $(g^-_{ab}, A^-_a, \Psi^-_\alpha)$ to the EDM equations on a manifold $M^-$ that smoothly extends to a timelike boundary $\Sigma^-$. For the BSKR wormholes, $M^-$ would correspond to the region $r < 0$, and $\Sigma^-$ would correspond to the hypersurface $r = 0$. If we identify $\Sigma^+$ with $\Sigma^-$ and denote the identified surface as $\Sigma$, we will obtain the enlarged spacetime $M = M^+ \cup \Sigma \cup M^-$ with fields $(g_{ab}, A_a, \Psi_\alpha)$. The BSKR wormhole spacetimes are constructed in this manner. We wish to investigate the conditions under which the fields $(g_{ab}, A_a, \Psi_\alpha)$ satisfy the EDM equations. In the case where they do satisfy the EDM equations, we also wish to investigate their smoothness properties. Obviously, satisfaction of the EDM equations and smoothness needs to be investigated only in an arbitrarily small neighborhood of $\Sigma$, since we have assumed that $(g_{ab}, A_a, \Psi_\alpha)$ is a smooth solution of the EDM equations on $M^+$ and $M^-$.

It is useful to make appropriate gauge choices for our original solutions $(g^+_{ab}, A^+_a, \Psi^+_\alpha)$ and $(g^-_{ab}, A^-_a, \Psi^-_\alpha)$ so that any nonsmoothness in the matching will not be a gauge artifact. It is very convenient to use Gaussian normal coordinates $(s_+, x_+^i)$ on $M^+$ in a neighborhood of $\Sigma^+$. Gaussian normal coordinates are defined by choosing coordinates $x_+^i$ on $\Sigma^+$ and extending them off of $\Sigma^+$ by keeping them constant on normal geodesics. We then define $s_+$ to be the proper distance from $\Sigma^+$ along each normal geodesic. It follows that in Gaussian normal coordinates, we have $g^+_{ss} = 1$ and $g^+_{si} = 0$. Similarly, we choose Gaussian normal coordinates on $M^-$ in a neighborhood of $\Sigma^-$, except that in this case we take $s_-$ to be minus the proper distance from $\Sigma^-$ along the normal geodesic. 

For the electromagnetic field, it is very convenient to work in a gauge on $M^+$ where $A^+_s = 0$, i.e., $A^{+}_a (\partial/\partial s)^a = 0$. This gauge can be achieved starting in an arbitrary gauge by choosing any function $\chi_0(x_+^i)$ on $\Sigma^+$ and solving $\partial \chi(s, x_+^i)/\partial s = A^+_s$ with the initial condition $\chi(0, x_+^i) = \chi_0(x_+^i)$. The gauge transformed potential $A'^+_a = A^+_a - \nabla_a \chi$ then satisfies the desired gauge condition. Similarly, we choose $A^-_s = 0$ in a neighborhood of $\Sigma^-$ in $M^-$. 

Finally, for the Dirac field, we must specify a tetrad in order to define its components. We choose a tetrad in $M^+$ by choosing an orthonormal triad, $\{ e^{+i}_{0}, e^{+i}_{1}, e^{+i}_{2} \}$,  on $\Sigma^+$ tangent to $\Sigma^+$ and supplementing it with $e^{+a}_{3} = (\partial/\partial s)^a$. We then propagate this tetrad into $M^+$ by parallel transport along the normal geodesics. We choose a tetrad in $M^-$ in the same manner. Note that since $s$ takes negative values in $M^-$, $e^{-a}_{3} = (\partial/\partial s)^a$ points toward $\Sigma^-$ in $M^-$.

As stated above, the spacetime $M$ is obtained by identifying $\Sigma^+$ and $\Sigma^-$. Since our gauge choices above do not place any restrictions on the choices of coordinates $x_+^i$ and $x_-^i$ on $\Sigma^+$ and $\Sigma^-$, we may assume without loss of generality that the identification is such that $x_+^i = x_-^i$. We then may drop the plus and minus subscripts on $x^i$. We then have the following necessary conditions for $(g_{ab}, A_a, \Psi_\alpha)$ to be a solution of the EDM equations on $M$.

First, in order to satisfy Einstein's equation, it is essential that the induced metric on $\Sigma^+$ and $\Sigma^-$ match, i.e., 
\begin{equation}
   g^+_{ij} \big\vert_{s=0} = g^-_{ij} \big\vert_{s=0} \, .
\label{metmatch}
\end{equation}
If this were not the case, the metric would be discontinuous on $\Sigma$. In that case, $\partial g^+_{ij}/\partial s$ would have a delta-function singularity and, as previously mentioned, nonlinear terms in the Einstein tensor involving this quantity could not even be defined 
\cite{PhysRevD.36.1017}. It also is necessary for a solution to the Einstein portion of the EDM equations that the extrinsic curvatures of $\Sigma^+$ and $\Sigma^-$ match, i.e., that
\begin{equation}
    \frac{\partial g^+_{ij}}{\partial s} \bigg\vert_{s=0} = \frac{\partial g^-_{ij}}{\partial s} \bigg\vert_{s=0} \, .
    \label{ecmatch}
\end{equation}
If this condition were not satisfied, it would give rise to a $\delta$-function contribution to the Einstein tensor, corresponding to the presence of an additional shell of matter \cite{Israel:1966rt}.

Maxwell's equations are first order differential equations involving the gauge invariant field strength tensor $F_{\mu \nu}$. Given the matching of the metric and extrinsic curvature as specified in the previous paragraph, it is necessary that 
\begin{equation}
    F^+_{\mu \nu} \big\vert_{s=0} = F^-_{\mu \nu} \big\vert_{s=0}
\end{equation}
since any failure of this matching to hold would give rise to a $\delta$-function contribution to Maxwell's equations, which would correspond to the presence of an additional shell of electric and/or magnetic charge and/or current. If this matching condition holds, then, in particular, the tangential components $F^+_{ij}$ and $F^-_{ij}$ match. Since our gauge condition $A_s = 0$ allows the freedom to perform any $s$-independent gauge transformation, we may use such gauge freedom to require matching of the vector potentials on $\Sigma$
\begin{equation}
   A^+_{i} \big\vert_{s=0} = A^-_{i} \big\vert_{s=0} \, .
   \label{vpmatch}
\end{equation}
The matching of the components $F^+_{si}$ and $F^-_{si}$ at $s=0$ then requires
\begin{equation}
    \frac{\partial A^+_{i}}{\partial s} \bigg\vert_{s=0} = \frac{\partial A^-_{i}}{\partial s} \bigg\vert_{s=0} \, .
    \label{vpmatch2}
\end{equation}

If Eq. (\ref{metmatch}) holds, we may choose the triad on $\Sigma^+$ to match the triad on $\Sigma^-$. It then follows that for a solution to the EDM equations, the Dirac field components must match on $\Sigma$, i.e.
\begin{equation}
   \Psi^+_{\alpha} \big\vert_{s=0} = \Psi^-_{\alpha} \big\vert_{s=0} \, 
   \label{dirmatch}
\end{equation}
since otherwise there would be a spurious delta-function source term in the Dirac equation.

In summary, we have just shown that a {\em necessary} condition for $(g_{ab}, A_a, \Psi_\alpha)$ to satisfy the EDM equations on $M$ is that in Gaussian normal coordinates associated with $\Sigma$ and in the gauge $A_s = 0$, further gauge choices can be made, if necessary, so that Eqs. (\ref{metmatch})--(\ref{dirmatch}) hold. The main result of this section is the following theorem:

\medskip

\noindent
\begin{theorem} Let $(g_{ab}, A_a, \Psi_\alpha)$ be the fields on $M$ obtained by gluing together solutions of the EDM equations on $M^+$ and $M^-$ in the manner described above. Suppose that in Gaussian normal coordinates on $\Sigma$ and in the gauge $A_s = 0$ the fields satisfy the matching conditions (\ref{metmatch})--(\ref{dirmatch}). Then $(g_{ab}, A_a, \Psi_\alpha)$ is a solution to the EDM equations on $M$. Furthermore, all of these fields are smooth ($C^\infty$).
\end{theorem}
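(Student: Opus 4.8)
The plan is to show that the matching conditions (\ref{metmatch})--(\ref{dirmatch}), together with the field equations holding on each side, force \emph{every} normal derivative of the fields to agree across $\Sigma$; once that is established, both smoothness and the satisfaction of the EDM equations on all of $M$ follow from soft arguments. First I would put the EDM system into ``normal form'' with respect to $\Sigma$. Working in the Gaussian normal coordinates $(s,x^i)$, the gauge $A_s=0$, and the parallel-propagated tetrad described above, each field equation can be solved algebraically for the top normal derivative of the corresponding field. Because $g^{ss}=1$, the $ij$-components of Einstein's equation determine $\partial_s^2 g_{ij}$ in terms of $g_{ij}$, $\partial_s g_{ij}$, their tangential derivatives, and the matter stress-energy; the $i$-components of $\nabla^\mu F_{\mu i}=4\pi J_i$ determine $\partial_s^2 A_i$ (note $F_{si}=\partial_s A_i$ since $A_s=0$); and, since $\Sigma$ is noncharacteristic, the Clifford element $\gamma^s$ satisfies $(\gamma^s)^2=g^{ss}I$ and is therefore invertible, so the Dirac equation determines $\partial_s\Psi$. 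Here the parallel transport of the tetrad along $\partial/\partial s$ makes the $s$-component of the spin connection vanish, and $A_s=0$, so that $\nabla_s\Psi=\partial_s\Psi$, which cleans up this step. Crucially, the right-hand sides of these three equations involve only tangential derivatives together with \emph{strictly lower-order} normal derivatives of all three fields; in particular the current $J_i\propto\bar\Psi\gamma_i\Psi$ contains no derivatives of $\Psi$. These same universal formulas hold up to and including $s=0$ on both sides, by continuity of the smooth extensions to $\Sigma^\pm$.

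Second, I would prove by induction on $n$ that at $s=0$ all of $\partial_s^k g_{ij}$ and $\partial_s^k A_i$ agree for $k\le n+1$ and all of $\partial_s^k\Psi$ agree for $k\le n$. The base case $n=0$ is precisely the matching conditions (\ref{metmatch})--(\ref{dirmatch}), noting that equality of the field values on all of $\Sigma$ automatically gives equality of their tangential derivatives there. For the inductive step I would apply $\partial_s^n$ to the three normal-form equations and evaluate at $s=0$: by the order-counting above, $\partial_s^{n+2}g_{ij}$, $\partial_s^{n+2}A_i$, and $\partial_s^{n+1}\Psi$ are each expressed through quantities covered by the inductive hypothesis (lower-order normal derivatives of all fields, and their tangential derivatives, which match because equality on $\Sigma$ is preserved by $\partial_i$). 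Hence they too match, closing the induction and showing that every normal---and therefore every---derivative of $g_{ij}$, $A_i$, and $\Psi$ agrees at $\Sigma$. The remaining metric and potential components are fixed by the gauge ($g_{ss}=1$, $g_{si}=0$, $A_s=0$) and match trivially.

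Third, smoothness and solution-hood follow formally. A function that is smooth on each closed side of $\Sigma$ and all of whose one-sided derivatives agree on $\Sigma$ is $C^\infty$ across $\Sigma$ (elementary, applying the mean value theorem order by order); applying this to each field component in the chart $(s,x^i)$ shows that $(g_{ab},A_a,\Psi_\alpha)$ is smooth on $M$. Finally, the EDM equations form a system $E[g,A,\Psi]=0$ with $E$ a smooth differential operator; $E$ vanishes on the dense open set $(M^+\setminus\Sigma)\cup(M^-\setminus\Sigma)$ by hypothesis, and since the glued fields are now smooth, $E[g,A,\Psi]$ is continuous and hence vanishes on all of $M$, including $\Sigma$. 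In particular the Einstein constraints ($G_{ss}$, $G_{si}$) and the Maxwell constraint, which are never invoked in the induction, are automatically satisfied.

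I expect the normal-form reduction and the derivative-order bookkeeping of the induction to be the step requiring the most care. One must confirm that, after the gauge and tetrad choices, no equation's right-hand side secretly contains a normal derivative of order equal to or higher than its left-hand side---the delicate cases being the matter stress-energy in Einstein's equation (which contains $F_{si}=\partial_s A_i$ and, through the spin connection, $\partial_s g_{ij}$) and the appearance of $g$ and $A$ inside the Dirac operator. The noncharacteristic hypothesis enters decisively, guaranteeing invertibility of $\gamma^s$ and of the $g^{ss}$ coefficient; were $\Sigma$ null, these principal symbols would degenerate, the top normal derivatives could not be recovered, and both the argument and the conclusion would genuinely fail. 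I would emphasize that this is purely a statement about matching of Taylor coefficients at $\Sigma$: we never claim the two smooth solutions coincide---which would be false in the smooth, non-analytic category---only that all of their derivatives agree on $\Sigma$, which is exactly what the smoothness of the glued fields requires.
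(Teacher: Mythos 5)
Your proposal is correct and follows essentially the same route as the paper: reduce the EDM system to normal form with respect to $\Sigma$ using the Gaussian normal coordinates, the gauge $A_s=0$, and the parallel-propagated tetrad, then bootstrap the matching of successive normal derivatives (with $\Psi$ one order behind $g_{ij}$ and $A_i$) and conclude smoothness and solution-hood by continuity. The only cosmetic difference is that the paper carries the tetrad as an explicitly evolved field obeying its own first-order transport equation, whereas you absorb this into the observation that parallel propagation kills the $s$-component of the spin connection; your added remarks on the constraint equations and the one-sided-derivative criterion for $C^\infty$ are harmless refinements of the same argument.
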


\begin{proof} We will show that, with our gauge choices, the fields $(g_{ab}, A_a, \Psi_\alpha)$ are smooth on $\Sigma$. Once smoothness is established, it follows immediately by continuity that they satisfy the EDM equations on $\Sigma$ since, by construction, these fields satisfy the EDM equations everywhere off of $\Sigma$.

By hypothesis, $(g^+_{ab}, A^+_a, \Psi^+_\alpha)$ is smooth for $s \geq 0$ and $(g^-_{ab}, A^-_a, \Psi^-_\alpha)$ is smooth for $s \leq 0$, so the only way $(g_{ab}, A_a, \Psi_\alpha)$ could fail to be smooth is if these quantities or their derivatives with respect to $s$ fail to match at $s=0$. Eqs.(\ref{metmatch}) and (\ref{ecmatch}) require matching of the metric and its first $s$ derivative at $s=0$, so the metric is at least $C^1$. Similarly, Eqs. (\ref{vpmatch}) and (\ref{vpmatch2}) imply that the vector potential is at least $C^1$ and Eq. (\ref{dirmatch}) implies that the Dirac field is at least $C^0$. As we shall now show, smoothness of these quantities then follows from the basic form of the EDM equations. 

In order to write the Dirac equation, we must make a choice of tetrad. We will use the tetrad $\{ e^a_\mu \}$ introduced above. Since $e_3^a = (\partial/\partial s)^a$ and the $s$ component of each of the other tetrad vectors vanishes, we need only be concerned with $e^i_\mu$ for $\mu = 0,1,2$. The parallel propagation evolution law takes the form
\begin{equation}
    \frac{\partial e^i_\mu}{\partial s} = f^{Ti}_{\mu} \left[g_{kl}, \frac{\partial g_{kl}}{\partial s}, e^j_\nu  \right].
    \label{teteq}
\end{equation}
Here $f^{Ti}_{\mu}$ is a smooth function of the indicated variables together with finitely many of their $x^k$ derivatives.
The Dirac equation then takes the form
\begin{equation}
    \frac{\partial \Psi_\alpha}{\partial s} = f^D_{\alpha} \left[A_k, \Psi_\alpha, e^i_\mu, \frac{\partial e^i_\mu}{\partial s} \right]
    \label{direq}
\end{equation}
where $f^{D}_{\alpha}$ is a smooth function of the indicated variables together with finitely many of their $x^k$ derivatives.
Note that the metric does not appear on the right side, since it can be reconstructed from the tetrad via
\begin{equation}
    g^{ij} = \sum_{\mu, \nu =0}^2 \eta^{\mu \nu} e^i_\mu e^j_\nu.
\end{equation}
Einstein's equation in Gaussian normal coordinates takes the form
\begin{equation}
    \frac{\partial^2 g_{ij}}{\partial s^2} = f^E_{ij} \left[e^i_\mu, \frac{\partial e^i_\mu}{\partial s}, A_k, \frac{\partial A_{i}}{\partial s}, \Psi_\alpha, \frac{\partial \Psi_{\alpha}}{\partial s} \right]
    \label{eineq}
\end{equation}
where $f^E_{ij}$ is a smooth function of the indicated variables together with finitely many of their $x^k$ derivatives. Finally, Maxwell's equations in the gauge $A_s = 0$ take the form
\begin{equation}
    \frac{\partial^2 A_{i}}{\partial s^2} = f^M_{i} \left[e^i_\mu, \frac{\partial e^i_\mu}{\partial s}, A_k, \frac{\partial A_{i}}{\partial s}, \Psi_\alpha \right].
    \label{maxeq}
\end{equation}

We now have all the ingredients necessary to prove smoothness. The quantities $(g^+_{ab}, A^+_a, \Psi^+_\alpha, e^{+i}_\mu)$ satisfy Eqs. (\ref{teteq})--(\ref{maxeq}) for $s > 0$, whereas $(g^-_{ab}, A^-_a, \Psi^-_\alpha, e^{-i}_\mu)$ satisfy Eqs. (\ref{teteq})--(\ref{maxeq}) for $s < 0$. By hypothesis, the matching conditions (\ref{metmatch})--(\ref{dirmatch}) hold on $\Sigma$. By construction, the tetrad vectors also match on $\Sigma$. It then follows from Eq. (\ref{teteq}) that the normal derivatives of the tetrad vectors match on $\Sigma$, so the tetrad is $C^1$. It then further follows from Eq. (\ref{direq}) and the matching conditions (\ref{vpmatch}) and (\ref{dirmatch}) that 
\begin{equation}
    \frac{\partial \Psi^+_{\alpha}}{\partial s} \bigg\vert_{s=0} = \frac{\partial \Psi^-_{\alpha}}{\partial s} \bigg\vert_{s=0} \, .
    \label{dirmatch2}
\end{equation}
Thus, $\Psi_\alpha$ is $C^1$. It then follows immediately from Eqs. (\ref{eineq}) and (\ref{maxeq}) together with our matching conditions that the metric and vector potential are $C^2$ at $s=0$. We now take an $s$ derivative of Eqs. (\ref{teteq}), (\ref{direq}), (\ref{eineq}), and (\ref{maxeq}) and repeat the argument to conclude that the tetrad vectors and Dirac field are $C^2$ and the metric and vector potential are $C^3$. By induction, all fields are $C^\infty$.
\end{proof}

\noindent
\begin{remark}
The fact that $\Sigma$ is a noncharacteristic surface played an essential role in the proof. If $\Sigma$ were null, the field equations would not uniquely determine derivatives transverse to $\Sigma$ in terms of quantities on $\Sigma$. Consequently, it can be possible to produce nonsmooth solutions by patching smooth solutions along a characteristic surface.
\end{remark}
\medskip

As already mentioned near the beginning of this section, the BSKR wormholes are produced by patching solutions together in the manner described above, so our results apply to the BSKR wormholes. It follows that if the required matching conditions Eqs. (\ref{metmatch})--(\ref{dirmatch}) hold, all of the fields must be smooth when written using our gauge choices. However, the BSKR fields are not smooth. This is most readily seen for the metric, which must be smooth when expressed in Gaussian normal coordinates. The Gaussian normal coordinate $s$ is related to the $r$ coordinate of the metric Eq. (\ref{whmetric}) by
\begin{equation}
    s(r) = \int_0^r F_1 (r) dr
\end{equation}
However, in all of their solutions, $dF_1/dr$ is discontinuous at $r=0$ \cite{PC}. Consequently, $d^2 s/dr^2$ is discontinuous at $r=0$. On the other hand, from the explicit expression (\ref{F2}), it can be seen that $F_2(r)$ is a smooth function of $r$ with $dF_2/dr = 0$ at $r=0$ but $d^2F_2/dr^2 \neq 0$ at $r=0$. Using the chain rule, we obtain
\begin{equation}
    \frac{d^3 F_2}{d s^3} = \frac{d^3 F_2}{d r^3} \left(\frac{d r}{d s} \right)^3 + 3 \frac{d^2 F_2}{d r^2} \frac{d r}{d s} \frac{d^2 r}{d s^2} + \frac{d F_2}{d r} \frac{d^3 r}{d s^3} 
\end{equation}
The first and last terms on the right side are continuous at $r=0$ but the middle term is discontinuous. Thus, we see that the BSKR wormhole metric fails to be $C^3$ at $r=0$. This fact also can be deduced from the plots given in \cite{Blazquez-Salcedo:2020czn} for the scalar curvature and the Kretschmann scalar, which can be seen to have a discontinuous derivative at $r=0$. It follows that the BSKR wormholes cannot satisfy all of the necessary matching conditions Eqs. (\ref{metmatch})--(\ref{dirmatch}). 

The fact that the scalar curvature has a discontinuous derivative at $r=0$ implies that the trace of the stress-energy tensor also has a discontinuous derivative at $r=0$. Since the Maxwell stress-energy tensor has vanishing trace, the Dirac stress-energy tensor must have a discontinuous derivative at $r=0$. This strongly suggests that the Dirac field cannot satisfy the required matching condition (\ref{dirmatch}). In the next section, we will analyze the matching conditions and show that this is the case.

\section{Matching Conditions for BSKR Wormholes}
\label{dirac}

We turn now to an analysis of the matching conditions for BSKR wormholes. This will require some care for the treatment of the Dirac field, so we first review some basic properties of Dirac spinors (see e.g. \cite{Wald:1984rg}).

In terms of Weyl spinors, a Dirac spinor is a pair composed of a Weyl spinor $\phi^A$ and complex conjugate Weyl spinor $\bar{\psi}^{A'}$. We introduce a basis $o^A$ and $\iota^A$ for the Weyl spinor space $W$ satisfying $o_A \iota^A = 1$ and use the complex conjugate basis $\bar{o}^{A'}$ and $\bar{\iota}^{A'}$ for the complex conjugate spinor space $\bar{W}$. From these bases, we can construct the quantities
\begin{align}
  t^{AA^\prime} = \frac{1}{\sqrt{2}}(o^A \bar o^{A^\prime} + \iota^A \bar \iota^{A^\prime}) \\
  x^{AA^\prime} =\frac{1}{\sqrt{2}}( o^A\bar\iota^{A^\prime} + \iota^A\bar o^{A^\prime})\\
  y^{AA^\prime} =\frac{i}{\sqrt{2}} (o^A\bar\iota^{A^\prime} - \iota^A\bar o^{A^\prime}) \\
  z^{AA^\prime} =\frac{1}{\sqrt{2}}( o^A \bar o^{A^\prime} - \iota^A \bar \iota^{A^\prime}).
\end{align}
which can be identified with an orthonormal tetrad $\{ e^a_\mu \}$ in spacetime. Conversely, a choice of orthonormal tetrad $\{e^a_\mu \}$ corresponds to a choice of spin basis $o^A, \iota^A$ up to sign.

In the presence of a vector potential $A_a$, the Dirac equation for the spinors $\phi^A$ and $\bar{\psi}^{A'}$ is given by the pair of equations
\begin{align}
    (i\nabla^{AA^\prime} + q A^{AA'}) \phi_A &= m\bar\psi^{A^\prime} \label{dirac1}\\
    (i\nabla_{AA^\prime} + q A_{AA'}) \bar\psi^{A^\prime}&= m \phi_A \label{dirac2}
\end{align}
We expand $\bar{\psi}^{A'}$ in the basis $\bar{o}^{A'}$ and $\bar{\iota}^{A'}$
\begin{equation}
    \bar{\psi}^{A'} = \alpha \bar o^{A^\prime} + \beta\bar \iota^{A^\prime}
\end{equation}
and we expand $\phi_A$ in the dual spinor basis $o_A^* = -\iota_A$ and $\iota_A^* = o_A$
\begin{equation}
    \phi_A = \gamma o_A^* + \delta \iota_A^*
\end{equation}
The Dirac spinor can then be represented by the components
\begin{equation}
    \Psi_\alpha =  \begin{bmatrix}
\alpha \\
\beta \\
\gamma\\
\delta \end{bmatrix} \label{dircomponents}
\end{equation}
In flat spacetime, we can choose the spinor basis $o^A, \iota^A$ and the corresponding orthonormal tetrad $\{e^a_\mu \}$ to be constant (i.e., have vanishing derivative) over spacetime. The Dirac equations (\ref{dirac1}) and (\ref{dirac2}) then take the form
\begin{equation}
 \gamma^\mu (i \partial_\mu + q A_\mu) \Psi = m \Psi
 \label{dirac3}
\end{equation}
where we have omitted the Dirac spinor indices and where
\begin{align}
    & \gamma^0 = \begin{bmatrix}
    0 & 0 & 1 & 0\\
    0 & 0 & 0 & 1\\
    1 & 0 & 0 & 0\\
    0 & 1 & 0 & 0\end{bmatrix} &&  \gamma^1 = \begin{bmatrix}
    0 & 0 & 0 & 1 \\
    0 & 0 & 1 & 0\\
    0 & -1 & 0 & 0\\
    -1 & 0 & 0 & 0\end{bmatrix} \nonumber \\
& \gamma^2 =\begin{bmatrix}
 0 & 0 & 0 & -i \\
 0 & 0 & i & 0 \\
 0 & i & 0 & 0 \\
 -i & 0 & 0 & 0 \\
\end{bmatrix} &&  \gamma^3 = \begin{bmatrix}
    0 & 0 & 1 & 0\\
    0 & 0 & 0 & -1\\
    -1 & 0 & 0 & 0\\
    0 & 1 & 0 & 0\end{bmatrix}.
\end{align}
This corresponds to the standard form of the Dirac equation in the chiral representation. In curved spacetime (or in a nonconstant basis in flat spacetime), additional terms will arise in (\ref{dirac3}) from the derivatives of the spinor basis, which may be computed in terms of the Ricci rotation coefficients of the corresponding tetrad $\{e^a_\mu \}$.

For the BSKR wormholes, the solution $(g^+_{ab}, A^+_a, \Psi^+_\alpha)$ is taken to be of the following form. The metric is assumed to be given by Eq. ~(\ref{whmetric}), with $M^+$ taken to be the region $r > 0$. The vector potential on $M^+$ is taken to be of the form
\begin{equation}
    A^+_a = V(r) dt
    \label{vpan}
\end{equation}
The Dirac field is taken to be an incoherent superposition of two solutions of the form
\begin{align}
    &&\Psi_{[1]}^+ =\begin{pmatrix}
    \cos \left(\frac{\theta }{2}\right) z(r) e^{i
   \left(\frac{\phi }{2}-t w\right)}\\
   i \kappa  \sin
   \left(\frac{\theta }{2}\right) \bar{z}(r) e^{i
   \left(\frac{\phi }{2}-t w\right)}\\
   -i \cos
   \left(\frac{\theta }{2}\right) \bar{z}(r) e^{i
   \left(\frac{\phi }{2}-t w\right)}\\
   -\kappa  \sin
   \left(\frac{\theta }{2}\right) z(r) e^{i
   \left(\frac{\phi }{2}-t w\right)}
    \end{pmatrix}
   , &&& \Psi_{[2]}^+ = \begin{pmatrix}i \sin \left(\frac{\theta }{2}\right) z(r) e^{i
   \left(-t w-\frac{\phi }{2}\right)}\\
   \kappa  \cos
   \left(\frac{\theta }{2}\right) \bar{z}(r) e^{i \left(-t
   w-\frac{\phi }{2}\right)}\\
   \sin \left(\frac{\theta
   }{2}\right) \bar{z}(r) e^{i \left(-t w-\frac{\phi
   }{2}\right)}\\
   i \kappa  \cos \left(\frac{\theta
   }{2}\right) z(r) e^{i \left(-t w-\frac{\phi
   }{2}\right)}\end{pmatrix} \label{diransatz}
\end{align}
where $\kappa = \pm 1$. An incoherent superposition of this sort is necessary in order to get a total current and stress-energy that is spherically symmetric. BSKR obtain numerical solutions to the EDM equations for $(g^+_{ab}, A^+_a, \Psi^+_\alpha)$ on $M^+$.

An obvious choice for $(g^-_{ab}, A^-_a, \Psi^-_\alpha)$ would be to take it to be an identical copy of $(g^+_{ab}, A^+_a, \Psi^+_\alpha)$. In that case, the matching conditions (\ref{metmatch}) and (\ref{vpmatch}) hold automatically. In order for the matching condition (\ref{ecmatch}) to hold, it is necessary and sufficient for the extrinsic curvature of $\Sigma^+$ to vanish. This holds if and only if
\begin{equation}
     \frac{\partial F_0}{\partial r} \bigg\vert_{r=0} = 0.
     \label{F0match}
\end{equation}
This condition is imposed by BSKR. Similarly, in order for (\ref{vpmatch2}) to hold, it is necessary and sufficient that
\begin{equation}
     \frac{\partial V}{\partial r} \bigg\vert_{r=0} = 0.
\end{equation}
This condition was {\em not} imposed by BSKR \cite{PC}. The failure of this condition to hold implies the presence of an additional charged shell of matter at $r=0$. 

We now consider the matching condition for the Dirac spinor. Equation (\ref{dirmatch}) applies for a continuous choice of tetrad. However, in the matching of $(g^+_{ab}, A^+_a, \Psi^+_\alpha)$ with its identical copy $(g^-_{ab}, A^-_a, \Psi^-_\alpha)$, the tetrad vector $e^{-a}_3$ points in the wrong direction as compared with the continuous tetrad choice made in the previous section. Thus, we must take the discontinuity of the tetrad vector $e^a_3$ at $r=0$ into account when formulating the matching conditions for the Dirac field. To do so, we note that the reversal of the tetrad vector $e^a_3$ at a point $x \in \Sigma$ corresponds to a parity transformation on the tetrad followed by a $180^\circ$ rotation about $e^a_3$. The application of these transformations to the tetrad while keeping the Dirac spinor unchanged is equivalent to applying the inverse of these transformations to the Dirac spinor while keeping the tetrad fixed. The parity operator on Dirac spinors is given by
\begin{equation}
   {\mathcal P} = \eta \gamma^0
\end{equation}
where one can make any of the choices $\eta = \{ 1, -1, i, -i \}$. The inverse transformation is of the same form. A $180^\circ$ rotation about $e^a_3$ is given by\footnote{We assume here that the $r$ direction in the BSKR Dirac spinor ansatz corresponds to what we are calling the $3$ direction. (BSKR do not explicitly say this, but any other choice would give rise to inconsistencies in the angular dependence of the quantities being matched.)}
\begin{equation}
    \mathcal{R} =\pm \left(
\begin{array}{cccc}
 i & 0 & 0 & 0 \\
 0 & -i & 0 & 0 \\
 0 & 0 & i & 0 \\
 0 & 0 & 0 & -i \\
\end{array}
\right).
\end{equation}
and the inverse also is of the same form. Thus, for the case where $(g^-_{ab}, A^-_a, \Psi^-_\alpha)$ is an identical copy of $(g^+_{ab}, A^+_a, \Psi^+_\alpha)$, the matching condition (\ref{dirmatch}) for a Dirac field becomes
\begin{equation}
    \Psi^+ \big\vert_{r=0} = {\mathcal P}^{-1} {\mathcal R}^{-1} \Psi^- \big\vert_{r=0} = {\mathcal P}^{-1} {\mathcal R}^{-1} \Psi^+ \big\vert_{r=0}
\end{equation}
where for the last equality we have used the equality of $\Psi^-$ and $\Psi^+$ in the original tetrads at $r=0$. For the ansatz (\ref{diransatz}), this matching condition for $\Psi^+_{[1]}$ yields
\begin{equation}
 \eta \begin{bmatrix}
-\cos \left( \frac{\theta }{2}\right) e^{\frac{i \phi }{2}-i t w}
   \bar{z}(0) \\
   -i \kappa  \sin \left(\frac{\theta }{2}\right)
   e^{\frac{i \phi }{2}-i t w}   z(0) \\
   -i \cos \left(\frac{\theta
   }{2}\right) e^{\frac{i \phi }{2}-i t w} z(0)\\
   -\kappa  \sin
   \left(\frac{\theta }{2}\right) e^{\frac{i \phi }{2}-i t w} \bar{z}(0) 
\end{bmatrix}=\begin{bmatrix}
-\cos \left(\frac{\theta }{2}\right) e^{\frac{i \phi }{2}-i t w}
   z(0)\\
   -i \kappa  \sin \left(\frac{\theta }{2}\right)
   e^{\frac{i \phi }{2}-i t w} \bar{z}(0)\\
   i \cos \left(\frac{\theta
   }{2}\right) e^{\frac{i \phi }{2}-i t w} \bar{z}(0)\\
   \kappa  \sin
   \left(\frac{\theta }{2}\right) e^{\frac{i \phi }{2}-i t w} z(0)
\end{bmatrix} \\
\label{psi1match}
\end{equation}
for some choice of $\eta = \{ 1, -1, i, -i \}$. It is easily seen that this condition cannot be satisfied for any choice of $\eta$ unless $z(0) = 0$. However, the Dirac fields obtained by BSKR have $z(0) \neq 0$. Thus, the wormhole spacetimes obtained by taking $(g^-_{ab}, A^-_a, \Psi^-_\alpha)$ to be an identical copy of $(g^+_{ab}, A^+_a, \Psi^+_\alpha)$ have a spurious delta-function source term for the Dirac field at $r=0$. As noted above, they also have a spurious charged shell of matter at $r=0$.

One could also consider other possible choices of $(g^-_{ab}, A^-_a, \Psi^-_\alpha)$ that are related to $(g^+_{ab}, A^+_a, \Psi^+_\alpha)$ by symmetry operations that map $\Sigma$ to itself. It would appear that the only potentially viable option would be a time reflection operation. For the metric (\ref{whmetric}), time reflection takes $g^+_{ab}$ to $g^+_{ab}$, but for the vector potential (\ref{vpan}), it takes $A^+_a$ to $-A^+_a$. Thus the spacetime obtained joining $(g^+_{ab}, A^+_a, \Psi^+_\alpha)$ to its time reflection would have $V(-r) = - V(r)$. Since the vector potential, $A_a$, flips sign between $r > 0$ and $r < 0$, it follows from Maxwell's equations that the charge-current vector $J^a$ of the Dirac field must correspondingly flip sign. However, the probability $4$-current, $P^a$, of any Dirac spinor is always a future-directed timelike vector, and the charge-current vector is given by $J^a = q P^a$. Therefore, in order to construct a wormhole spacetime in this manner, the Dirac field for $r < 0$ must have charge that is of opposite sign\footnote{This fact is undoubtedly closely related to points raised in \cite{Bronnikov:2021jlz} and \cite{Konoplya:2021hsm}.} to that of the Dirac field for $r > 0$. It is unclear to us what interpretation could be given to a quantity obtained by combining a Dirac field of charge $q$ for $r > 0$ with a Dirac field of charge $-q$ for $r < 0$, but it is clear that such a quantity cannot in any sense be considered to be a classical solution to the Dirac equation on the wormhole spacetime.

A final possibility along these lines would be to take $g^-_{ab} = g^+_{ab}$ and $A^-_a = A^+_a$, but take $\Psi^-_\alpha = {\mathcal T} \Psi^+_\alpha$, where $\mathcal T$ is the Dirac time reversal operator. For a vector potential of the form (\ref{vpan}), ${\mathcal T} \Psi^+_\alpha$ will satisfy the Dirac equation with the original $q$. The time reversal operator reverses the spatial components of the Dirac charge-current $J^a$ and the time-space components of the Dirac stress-energy tensor $T_{ab}$, but since these components vanish in the original solution $(g^+_{ab}, A^+_a, \Psi^+_\alpha)$, it follows that $(g^+_{ab}, A^+_a, {\mathcal T} \Psi^+_\alpha)$ will satisfy the EDM equations. For this choice of $(g^-_{ab}, A^-_a, \Psi^-_\alpha)$, the metric matching conditions will again be satisfied, provided that (\ref{F0match}) has been imposed. Since $A_a(-r)=A_a(r)$, the Maxwell matching condition (\ref{vpmatch2}) again does {\em not} hold, thereby requiring an additional shell of charged matter at $r=0$. We now consider the matching conditions for the Dirac field.

The action of the time reversal map $\mathcal T$ on a Dirac spinor (\ref{dircomponents}) is 
\begin{equation}
  {\mathcal T}  \begin{pmatrix}
\alpha(t) \\
\beta(t) \\
\gamma(t)\\
\delta(t)
\end{pmatrix} = e^{i \rho }\begin{pmatrix}
\bar\beta(-t) \\
- \bar\alpha(-t) \\
 \bar\delta(-t)\\
-\bar\gamma (-t)
\end{pmatrix}
\end{equation}
where $e^{i \rho}$ is an arbitrary phase. The required matching condition is now
\begin{equation}
    \Psi^+ \big\vert_{r=0} = {\mathcal P}^{-1} {\mathcal R}^{-1} \Psi^- \big\vert_{r=0} = {\mathcal P}^{-1} {\mathcal R}^{-1} {\mathcal T} \Psi^+ \big\vert_{r=0}.
    \label{dirmatchT}
\end{equation}
If we apply this condition to $\Psi^+_{[1]}$ using the ansatz (\ref{diransatz}), we find that the angular factors do not match, so (\ref{dirmatchT}) cannot be satisfied. Nevertheless, we can instead try to match $\Psi^+_{[1]}$ to ${\mathcal P}^{-1} {\mathcal R}^{-1} \Psi^-_{[2]} = {\mathcal P}^{-1} {\mathcal R}^{-1} {\mathcal T} \Psi^+_{[2]}$ at $r=0$. In this case, the angular factors do match. However, a calculation similar to that of Eq. (\ref{psi1match}) shows that the required matching condition holds only when $z(0)= 0$, which is not satisfied by any of the BSKR wormholes.

In summary, our analysis of the matching conditions shows that the BSKR wormholes require the presence of an additional shell of charged matter and, more seriously, contain a spurious distributional source for the Dirac field at $r=0$. This confirms the conclusion of Sec. \ref{smoothness} that the BSKR wormholes are not solutions to the EDM equations.

\medskip 

\noindent
{\bf Acknowledgements} We thank Jose Blázquez-Salcedo, Christian Knoll, and Eugen Radu for providing us with \cite{PC}---as well as an earlier draft of \cite{PC}---prior to posting. D.L.D. acknowledges his support as a Fannie and John Hertz Foundation Fellow, holding the Barbara Ann Canavan Fellowship. The research of R.J.W. was supported by the National Science Foundation Graduate Research Fellowship under Grant No. DGE 1706045. This research was supported in part by NSF Grant NJo. 21-05878 to the University of Chicago.

\bibliographystyle{JHEP.bst}
\bibliography{main.bib}

\end{document}